\newtheorem{theorem}{Theorem}
\newtheorem{lemma}[theorem]{Lemma}
\newtheorem{proposition}[theorem]{Proposition}
\newtheorem{corollary}[theorem]{Corollary}
\newtheorem{definition}[theorem]{Definition}
\theoremstyle{break}
\newtheorem{example}[theorem]{Example}
\newenvironment{proof}{\paragraph{Proof.}}{\hfill$\square$}
\title{Search for an Immobile Hider \\ on a Binary Tree with \\ Unreliable Locational Information}
\date{}
\author{Steve Alpern\thanks{Warwick Business School, University of Warwick, Coventry CV4 7AL, UK, steve.alpern@wbs.ac.uk} \and Thomas Lidbetter\thanks{Department of Management Science and Information Systems, Rutgers Business School, NJ 22902, USA, tlidbetter@business.rutgers.edu}}
\providecommand{\keywords}[1]{\textbf{\textbf{Keywords:}} #1}
\begin{document}
	
\maketitle

\begin{abstract}
Adversarial search of a network for an immobile Hider (or target) was
introduced and solved for rooted trees by \cite{Gal79}. In this zero-sum game, a Hider picks a point to hide on the tree and a Searcher picks a unit speed trajectory starting at the root. The
payoff (to the Hider) is the search time. In Gal's model (and many
subsequent investigations), the Searcher receives no additional information
after the Hider chooses his location. In reality, the Searcher will often receive such locational information. For
homeland security, mobile sensors on vehicles have been used to locate
radioactive material stashed in an urban environment. 
In a military setting, mobile sensors can detect chemical signatures from
land mines. In predator-prey search, the predator
often has specially attuned senses (hearing for wolves, vision for eagles,
smell for dogs, sonar for bats, pressure sensors for sharks) that may help
it locate the prey. How can such noisy locational information be used by the
Searcher to modify her route? We model such information as signals which
indicate which of two branches of a binary tree should be searched first,
where the signal has a known accuracy $p<1$. Our solution calculates which
branch (at every branch node) is \textit{favored}, meaning it should always
be searched first when the signal is in that direction. When the signal is
in the other direction, we calculate the probability the signal should be
followed. Compared to the optimal Hider strategy in the classic search game of Gal, the Hider's optimal distribution for this model is more skewed towards leaf nodes that are further from the root.
\end{abstract}

\keywords{game theory, signals, zero-sum games, networks}

\newpage

\section{Introduction}

The game where a Searcher wishes to find a stationary adversarial Hider (or
target), starting from a designated location in a search region $S$, was
introduced in the classic book on Differential Games of \cite{Isaacs}.
Later, \cite{Gal79} solved the game where $S$ is a tree, considering the
Searcher starting point as the root. It is optimal for the Hider to locate at a leaf node, as
other locations are dominated. The optimal hiding distribution is the Equal
Branch Density (EBD) distribution, which locates in each branch at a branch node
with a probability proportional to the length of that branch. It is optimal for the Searcher,
when reaching a branch node for the first time, to choose to fully
search each branch equiprobably, and when returning to the branch node to
search the other branch. This description assumes a binary tree, but any tree can be transformed into a binary tree with the addition of some arcs of length zero. 
The Searcher thus searches in a depth-first manner and traces out a minimal length tour, or {\em Chinese
Postman Tour}. The value of the game is the total length $\mu $ of the tree.

A particular assumption made by Isaacs and Gal is that the Searcher receives
no additional information (on the Hider's location) during the search. In
reality, this assumption is often unwarranted, as the Hider might emit
signals (odor, radiation, sound) that the Searcher might be able to detect.
The aim of this paper is to see how such information changes the optimal
strategies of Searcher and Hider. If the signals given out by the Hider were
perfect, the solution is simple: the Hider locates at the furthest point
from the Searcher starting point and the Searcher proceeds directly to the
Hider's location. However in practice the Hider's location is a noisy signal
at best. Our model gives the Searcher a signal as to which branch at her
current branch node contains the Hider -- it is correct with known
probability $p$. When $p$ is $1/2$ (for a binary tree) the signal is useless
and the game reduces to that solved by Gal. When $p>1/2,$ we identify a
favored branch at each branch node: a signal that the Hider is in that
branch is always followed, while we determine the probability that the alternative signal should be followed. We also
determine the optimal hiding distribution, as a function of $p$. Examples
illustrating the solution are given in Section~\ref{sec:main}, before our formal
analysis. 

The motivation for giving the Searcher additional information in the form of
signals comes from real world examples and prior academic work which we put
into a game theoretic context for the first time here. In the field of
Homeland Security, \cite{Hochbaum14} analyzed the use of mobile sensors
in cities to locate cached nuclear material through the emitted radiation.
In a military setting, \cite{JA15} considering the detection of
landmines and improvised explosive devices. Sometimes specially trained dogs
can use their sense of smell to the same end \cite[see][]{Evans22}. The optimal
routing problem for mobile sensors has been explored by \cite{Paley16}. But
none of these investigations have been carried out in a game theoretic
context, where the target is hidden adversarially. 

In the field of predator search for prey (stationary or mobile), sensory
cues are important and explain in part the highly developed senses of many
predators. \cite{HM13} observe that ``Most motile organisms use
sensory cues when searching for resources, mates, or prey{\ldots}Yet,
classical models of species encounter rates assume that searchers move
independently of their targets.'' This failure also appears in some classical
hide-seek game models which we attempt to remedy here. Further work on
detection of targets during search is mentioned in our literature review
(Section~\ref{sec:lit}).

\section{Literature Review} \label{sec:lit}

We now give a short overview of work on network search games since \cite{Gal79} as well as some further examples of sensory detection of targets in a non game theoretic context. 

The pioneering work of \cite{Gal79} on tree search, described in our opening paragraph, has been extended in many ways. More general networks were analyzed by \cite{RP93} and \cite{Gal2001}. 
Computational methods for determining optimal strategies were given by \cite{AA}. 
The requirement for a designated Searcher starting point was removed by \cite{DG08}. \cite{AL13, AL14} studied search games
on windy networks and by expanding regions rather than paths. \cite{Alpern17} restricted the search paths on general networks to ``combinatorial'' ones consisting of sequences of edges. 
Other investigations considered search at nodes of a lattice \citep{Zoroa13} and costs for searching at nodes \citep{BK15}. The requirement to bring the target back to the root after capture (find-and-fetch) was considered by \cite{Alpern11}. \cite{Angel20} considered the {\em linear search problem} in the setting where the Searcher has a ``hint'' about the location of the target. For general discussions of search games, see \cite{AG03}, \cite{Garnaev} and \cite{Hohzaki16}.

There has been considerable work on the detection of targets by sensors (machines) or senses (animals). An abstract computer science approach is given in \cite{Patan12}. A search technique which combines vehicles and drones has been analyzed by \cite{GF19}. In the field of ecology, \cite{NON20} show how the ladybird beetle uses olfcatory signals in their walks on plants to find aphids. Females made more use of these signals than males. The location signals can sometimes come in multiple forms, as in \cite{Catania08}, where movement, shape and smell of prey can all be detected by water shrews.

\section{Illustration of Main Results} \label{sec:main}

Our main result is that for our game $G(Q,O,p)$, every branch node $j$ has a {\em favored branch}. With some probability $\beta=\beta(j)$ the Searcher searches the favored branch first, whatever the signal is. We call $\beta$ the {\em favoring bias}, and it is calculated according to Equation~({\ref{beta}) in Theorem~\ref{thm:value}.
With probability $1-\beta$ the Searcher follows the signal, whichever direction it points.
That is, she searches the branch the signal implies the Hider is in. 
When she arrives back at the branch node, she searches the other branch. 
In particular, the Searcher never searches (first) the unfavored branch when the signal indicates the favored branch. 
For simple two-arc trees, or penultimate nodes (whose two branches are both leaf arcs), the favored branch is the longer one. 
The hiding strategy is different that the one found by Gal for the no-signal case, but is also given by a recursive algorithm.

To illustrate the nature of the optimal Searcher strategy we describe the solution for the simplest tree which has more than two arcs. 

\begin{example} \label{ex1}
Consider the tree shown in Figure~\ref{fig:network}. The quantitative recursive calculations of the favoring biases will be given in Section~\ref{sec:example}, but for now we qualitatively describe the optimal Searcher strategy.  

\begin{figure}[h]
\center
\includegraphics[width=8cm]{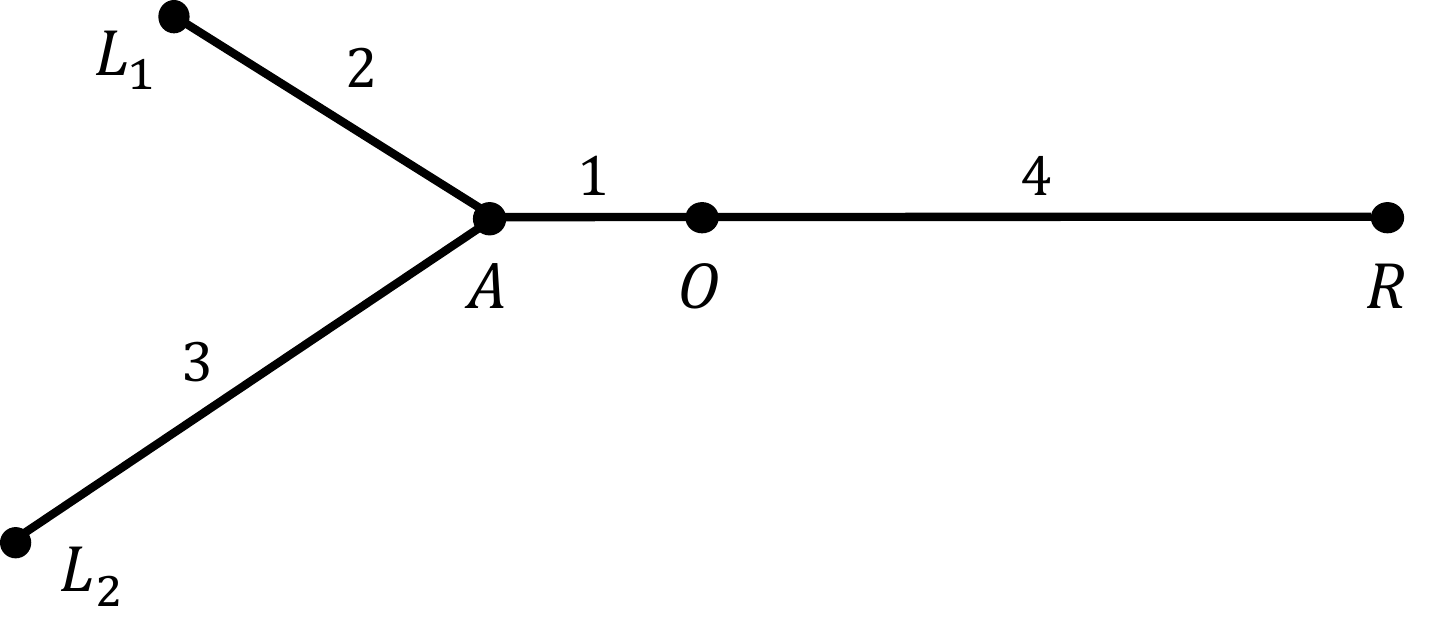}
\caption{A tree with root $O$.}
\label{fig:network}
\end{figure}

The tree of Figure~\ref{fig:network} has two branch points, $O$ (Searcher starting point) and $A$.
The Hider can choose one of the two leaf nodes on the left ($L_1$ and $L_2$) or the leaf node $R$ on the right, as all other points are clearly dominated by these.
Suppose that $p = 2/3$. 
The optimal Searcher strategy is given below in Figure~\ref{fig:network-sol}.

\begin{figure}[h]
\center
\includegraphics[width=8cm]{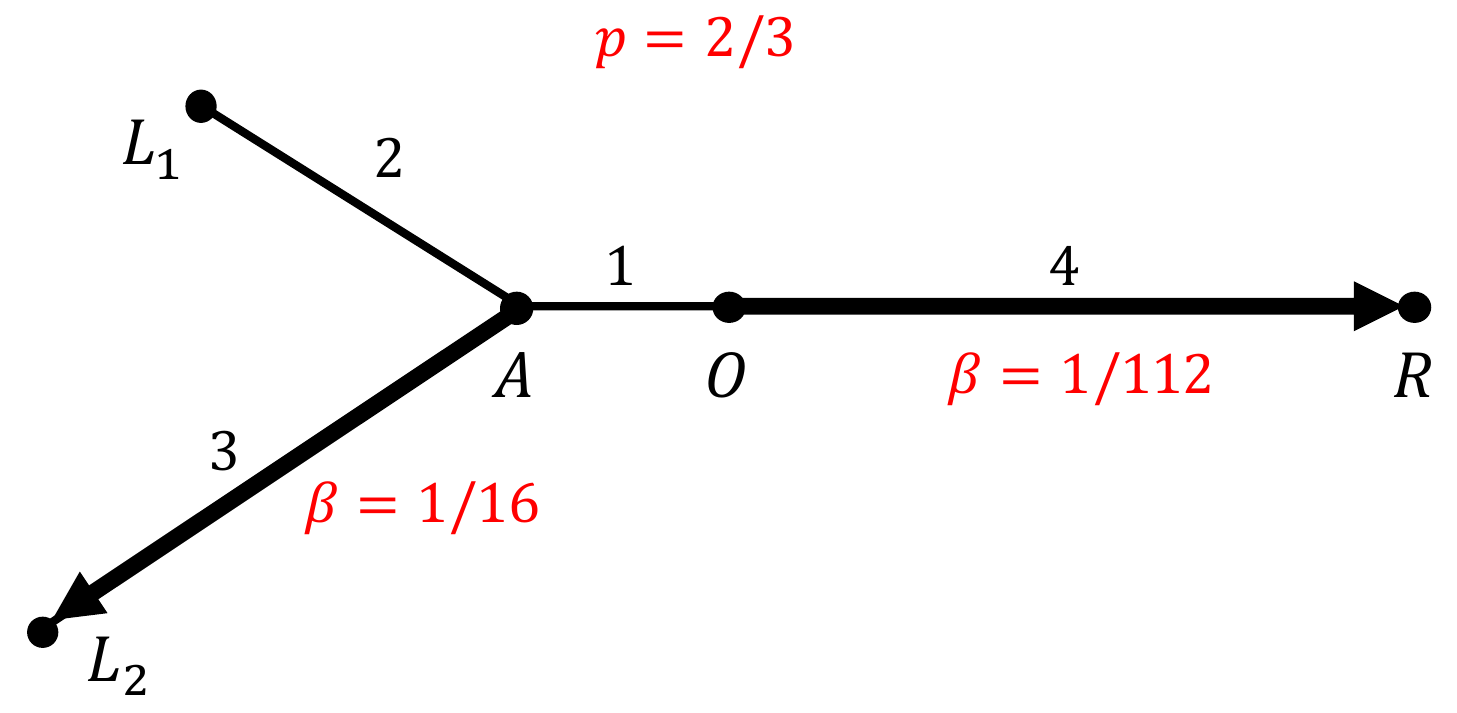}
\caption{A tree with root $O$, favored branches thickened.}
\label{fig:network-sol}
\end{figure}

We have to specify what the Searcher does, probabilistically, at each of her branch nodes $A$ and $O$. 
At node $A$, the favored branch is the longer arc $AL_2$ (indicated by the thickened line) and the favoring bias is given by $\beta =1/16$, which is calculated according to (\ref{beta}), as described later in Theorem~\ref{thm:value}.
So when the Searcher reaches $A$ she goes to $L_{2}$ before or without looking at the signal with probability $1/16$ and otherwise simply follows the signal. 
At $O$ the favored branch is the arc $OR$ and the favoring bias is $\beta =1/112$.
So at the start of the game, the Searcher goes to $R$ with probability $1/112$, otherwise she follows her signal.

\end{example}

\section{Formal Description of the Model}
\label{sec:model}

Here we give a more detailed description of the game  than in the Introduction. We define a game $G=G(Q,O,p)$, where  $Q$ is a tree, given as a metric space (in that arcs have lengths), $O$ is a point of $Q$ and $1/2<p\le1$. 
The Hider chooses a leaf node (other hiding strategies are dominated) and then the Searcher conducts some unit speed minimum length tour of the tree (that is, a depth-first search).
We assume that $Q$ is a binary tree: that is, each node has degree either three or one with the possible exception of the root $O$. (Any tree that is not binary can be converted to a binary tree by adding some arcs of length $0$.) We call every node of degree three a {\em branch node}, and if $O$ has degree two, we also call $O$ a branch node. At each branch node $j$, we call the set of points at least as far away from $O$ as $j$ the {\em subtree at $j$}. Removing $j$ from the subtree at $j$ partitions this subtree into two connected components we refer to as the {\em branches} at $j$.

When arriving at any branch node $j$, the Searcher receives a signal as to which branch, $Q_1$ or $Q_2$ the Hider belongs to. This signal is correct with the fixed and known probability $p$, and wrong with probability $q=1-p$. 
If the Hider lies in neither branch, any signal distribution may be used, as in this case the Searcher will return to node $j$ again after time equal to twice the length of the branches, regardless of her search method. As $p \rightarrow 1/2$, the signal becomes useless and the solution of the game reverts back to that of the game without any signals, as solved in \cite{Gal79}. 

The Searcher picks which branch to search with knowledge of the signal, but she does not have to
follow the signal (in fact it is almost never optimal to always follow the signal). 
The payoff is the time~$T$ for the Searcher to reach the Hider. A Searcher mixed strategy can
be given by specifying her choice of branch on her first
arrival to each branch node, dependent on the signal. It is a distribution
over four strategies, specifying which of two ways to go based on the binary signal:
choose $Q_{1}$ with any signal, choose $Q_{2}$ with any signal, follow the
signal, oppose the signal. The last unintuitive choice will indeed be shown
to be dominated.

The solution for the Searcher will have the following structure. At every branch node $j$ there is a favored branch $Q_1$ and a positive probability $\beta$ (the {\em favoring bias}) for it to be chosen before looking at the signal. 
With the remaining probability $1-\beta$ the search follows the signal. 
So in particular the Searcher will never choose the unfavored arc (branch) when the signal is for the favored one. The use of biased depth-first Searcher strategies (random choices at every branch node) of the Searcher was introduced in another context in \cite{Alpern10} and \cite{AL14}, but those distributions are not the same is in the present context.

The optimal Hider distribution over the leaf nodes can be found by a similar stochastic process in which the Hider starts at the root $O$ and at each branch node chooses a branch to enter according to a certain distribution. Of course, this is merely a mental calculation for the Hider, who is stationary in this game. 

Some qualitative findings are that for all penultimate nodes (where both branches are single arcs), the longer arc is favored, and that it is (almost) never optimal for the Searcher to simply follow her signal.

A final, fairly obvious, observation is that having signals cannot hinder the Searcher, because she could always ignore them. 
That is, the value of the game with signals cannot be larger than that of the classic search game without signals. 
Since the value of this classic search game on a tree is equal to its total length $\mu$ \citep{Gal79}, this must be an upper bound for the value of the search game with signals on a tree. 
For later purposes, we state the following.

\begin{lemma}[$V<\mu$] \label{lem:value}
The game $G(Q,O,p)$ has a value not larger than the total length $\mu$ of $Q$.
\end{lemma}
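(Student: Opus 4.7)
The plan is to turn the informal observation stated just before the lemma — that ignoring signals is always an option for the Searcher — into a formal inequality. The key point is that every Searcher strategy that exists in Gal's classical signal-free game is also available in $G(Q,O,p)$, since the Searcher may simply discard whatever signal she receives at a branch node.

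First I would confirm that $G(Q,O,p)$ has a value at all, so that the statement ``the value is at most $\mu$'' is well-posed. The Hider's undominated pure strategies are the finitely many leaf nodes of the binary tree $Q$, as noted in the model description. A pure Searcher strategy is specified by choosing, at each of the finitely many branch nodes, one of four reactions to the binary signal (go to $Q_1$, go to $Q_2$, follow the signal, or oppose it), together with which branch to take second. Thus both players have finite pure strategy sets and $G(Q,O,p)$ is a finite matrix game, so the minimax theorem supplies a value $V$.

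Next I would let $\sigma^G_S$ denote any mixed Searcher strategy that is optimal in Gal's signal-free game on $(Q,O)$ — for instance the depth-first tour that, on first arrival at each branch node, chooses equiprobably which branch to explore fully before returning. Because $\sigma^G_S$ never consults the signals, it is a well-defined mixed strategy for the Searcher in $G(Q,O,p)$ as well. By the theorem of \cite{Gal79}, $\sigma^G_S$ guarantees expected search time at most $\mu$ against every leaf-hiding strategy of the Hider. Therefore
\[
V \;=\; \min_{\sigma_S}\, \max_{\sigma_H}\, E[T\mid \sigma_S,\sigma_H]
\;\le\; \max_{\sigma_H}\, E[T\mid \sigma^G_S,\sigma_H]
\;\le\; \mu,
\]
which is the desired bound.

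There is really no obstacle here beyond a small amount of bookkeeping: the only thing to be careful about is that the strategy spaces are genuinely finite, so that value existence requires no extra topological argument. Once that is in place, the containment of Searcher strategy sets between Gal's game and $G(Q,O,p)$ immediately gives $V \le \mu$.
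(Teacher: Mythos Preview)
Your proof is correct and follows essentially the same approach as the paper: exploit the fact that the Searcher may ignore her signals and therefore import Gal's optimal strategy (an equiprobable mixture of a Chinese Postman Tour and its reverse), which guarantees expected search time at most $\mu$. The paper states the argument more tersely and does not separately verify value existence, but the core idea is identical.
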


Later we can show that $V=\mu $ {\em implies} that the tree is a single arc (with $O$ at one end).

For the sake of completeness, we could establish Lemma~\ref{lem:value} directly by repeating Gal's idea of having the Searcher play an equiprobable mixture of a Chinese Postman Tour of the tree $Q$ and its time reversed tour. 
Such a mixed strategy reaches every point $H$ in $Q$ in expected time not more than the length $\mu $ of $Q$ because such a tour has length $2\mu$.

\section{Solution of the Game}
\label{sec:solution}

We first give some basic definitions that will enable us to state our main results on the value of the game $G(Q,O,p)$.

\begin{definition}
We define the {\em mean depth} $D$ of a rooted tree, with respect to a given probability measure $\lambda$ on its leaf nodes, as the mean distance from the root to the set $\mathcal{L}$ of leaf nodes, weighted with respect to $\lambda$. More precisely, we define
\[
D = D_Q = D(Q,O,\lambda) = \sum_{v \in \mathcal{L}}  \lambda(v) ~ d(O,v),
\]
\end{definition}

In this paper we will take $\lambda$ to be the optimal Hider strategy $\bar{\lambda}$. We will show that $\bar{\lambda}$ is optimal in Theorem~\ref{thm:value}. Clearly, $D \le D_{\max} \equiv \max_{v \in \mathcal{L}} d(O,v)$, where $D_{\max}$ is usually called the {\em depth} of the tree.

\begin{definition}[Optimal Hider distribution $\bar{\lambda}$.] The optimal Hider distribution $\bar{\lambda}=\bar{\lambda}_Q$ on a binary tree $Q$ with root $O$ is concentrated on the leaf nodes of $Q$ and will be defined recursively. 
If $Q$ has only one arc $OA$, then $\bar{\lambda}_Q(A)=1$.

Suppose now that $\bar{\lambda}_Q$ has been defined for all trees with at most $m$ arcs. Let $Q$ have $m+1$ arcs. If $O$ has degree $1$ with an arc leading to a branch $Q'$, then $Q$ and $Q'$ have the same leaf nodes, and the Hider distribution on these is the same.

Now suppose that $O$ is a branch node with branches $Q_1$ and $Q_2$ rooted at $O$ whose optimal hiding distributions are $\bar{\lambda}^1$ and $\bar{\lambda}^2$, respectively. Let $\mu_1$ and $\mu_2$ denote their respective total lengths. Without loss of generality, suppose that $Q_1$ has greater mean depth than $Q_2$, that is $D(Q_1,\bar{\lambda}^1) \ge D(Q_2,\bar{\lambda}^2)$.

In this case, we will call $Q_1$ the {\em favored branch}. We now define the optimal Hider distribution on $Q$ (on its leaf nodes) by the formula
\begin{align}
\bar{\lambda}(v) = 
\begin{cases}
\frac{p \mu_1}{p \mu_1 + q \mu_2} \bar{\lambda}^1(v) \text{ if $v$ is a leaf node of $Q_1$ and} \\
\frac{q \mu_2}{p \mu_1 + q \mu_2} \bar{\lambda}^2(v) \text{ if $v$ is a leaf node of $Q_2$.}
\end{cases}
\label{eq:lambda-bar}
\end{align}
\end{definition}

We illustration the optimal Hider strategy with the following simple example

\begin{example}
Suppose $Q$ is a tree with two arcs, $OA$ of length $3$ and $OB$ of length $5$. With no signals ($p=1/2$), the EBD distribution of \cite{Gal79} says that the optimal probabilities of hiding at $A$ and $B$ are proportional to their lengths: that is, $3/8$ and $5/8$, respectively. Since $D_{OB}=5>3=D_{OA}$, clearly $OB$ is the favored branch, so on $Q$ we have
\[
\bar{\lambda}(B) = \frac{p \cdot 5}{p \cdot 5 + q \cdot 3}(1) = \frac{5p}{2p+3} > \frac{5}{8},
\]
where the inequality follows from our assumption $1/2 < p <1$. Thus we see that when there are signals the weights on leaf nodes are no longer proportional, but skewed further towards the longer branches. So for this tree we have that the mean depth is given by
\[
D=\bar{\lambda}(A) \cdot 3 + \bar{\lambda}(B) \cdot 5 =  \frac{3 - 3p}{2p+3} (3) + \frac{5p}{2p+3}(5) = \frac{16p+9}{2p+3}.
\]

We observe that as $p$ goes to $1$ and $q$ to $0$, the distribution of $\bar{\lambda}$ becomes concentrated on the leaf node at greatest distance from $O$, and $D$ converges to that distance. As $p$ goes to $1/2$, the distribution of $\bar{\lambda}$ converges to the EBD distribution.

\end{example}

Before stating our main theorem, we make an observation.

\begin{proposition} \label{prop:Delta}
Suppose $Q$ is a tree with root $O$ and first suppose that $O$ has degree $1$. Let $A$ be the neighbor of $O$ and let $Q'$ be the subtree rooted at $A$. Let $\ell$ denote the length of the arc $OA$. Then 
\[
D_Q=\ell + D_Q'.
\]
Now suppose $O$ has degree 2 and let $Q_1$ and $Q_2$ be the branches at $O$ with lengths $\mu_1$ and $\mu_2$, respectively and suppose $Q_1$ is the favored branch (${D_{Q_1} \ge D_{Q_2}}$). Then
\[
D_Q = \frac{p \mu_1 D_{Q_1} + q\mu_2 D_{Q_2}}{p\mu_1+q\mu_2}.
\]
\end{proposition}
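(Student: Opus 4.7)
The proposition is essentially a direct consequence of the recursive definition of $\bar{\lambda}$ in equation~(\ref{eq:lambda-bar}), together with the definition of the mean depth $D$ as a $\bar{\lambda}$-weighted sum of leaf distances from $O$. My plan in both cases is to expand the definition of $D_Q$, identify how the leaf set of $Q$ decomposes in terms of the subtree structure, use the recursion for $\bar{\lambda}$ on these pieces, and finally observe that each component sum is a probability distribution (so it sums to $1$).

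For the first statement, I would note that when $O$ has degree $1$ with unique neighbor $A$, the leaf sets of $Q$ and $Q'$ coincide, the defining recursion sets $\bar{\lambda}_Q(v)=\bar{\lambda}_{Q'}(v)$ for each such $v$, and every path from $O$ to a leaf $v$ must traverse the arc $OA$, so $d_Q(O,v)=\ell+d_{Q'}(A,v)$. Substituting into the definition of $D_Q$, the sum splits into $\ell\cdot\sum_v\bar{\lambda}_{Q'}(v)=\ell$ plus $\sum_v\bar{\lambda}_{Q'}(v)\,d_{Q'}(A,v)=D_{Q'}$, giving $D_Q=\ell+D_{Q'}$.

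For the second statement, when $O$ is a branch node the leaves of $Q$ partition into the leaves of $Q_1$ and those of $Q_2$, and for a leaf $v$ in $Q_i$ the tree distance $d_Q(O,v)$ equals $d_{Q_i}(O,v)$ because $O$ is itself the root of $Q_i$. Plugging the piecewise formula~(\ref{eq:lambda-bar}) into $D_Q=\sum_v\bar{\lambda}(v)\,d(O,v)$, the $Q_1$-sum yields
\[
\frac{p\mu_1}{p\mu_1+q\mu_2}\sum_{v\in\mathcal{L}(Q_1)}\bar{\lambda}^1(v)\,d_{Q_1}(O,v)=\frac{p\mu_1}{p\mu_1+q\mu_2}\,D_{Q_1},
\]
and analogously for the $Q_2$-sum, after which adding the two fractions produces the claimed formula.

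\textbf{Main obstacle.} There is no substantive obstacle: the argument is just careful unpacking of notation. The only point that needs a moment's attention is verifying that $\sum_v\bar{\lambda}^i(v)=1$ so that the constant factors $p\mu_1/(p\mu_1+q\mu_2)$ and $q\mu_2/(p\mu_1+q\mu_2)$ come through the sums cleanly; this holds because $\bar{\lambda}^i$ is defined as the Hider's distribution on the leaves of $Q_i$ and is therefore a probability measure by the inductive construction.
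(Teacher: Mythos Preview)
Your proposal is correct and follows essentially the same approach as the paper's proof: both parts are proved by expanding the definition of $D_Q$, decomposing the leaf set according to the subtree structure, substituting the recursion for $\bar{\lambda}$, and using that $\bar{\lambda}$ (respectively $\bar{\lambda}^i$) is a probability distribution so its values sum to $1$. The paper's write-up is organized exactly along the lines you outline.
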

\begin{proof}
For the first part of the proposition, we calculate
\begin{align*}
D_Q &=  \sum_{v \in \mathcal{L}}  \bar{\lambda}(v) ~ d(O,v) \\
&=  \sum_{v \in \mathcal{L}}  \bar{\lambda}(v) (\ell + d(A,v)) \\
&= \ell + \sum_{v \in \mathcal{L}}  \bar{\lambda}(v) ~ d(A,v) \\
&= \ell+ D_Q',
\end{align*}
where the penultimate equality follows from the fact that $\bar{\lambda}$ is a probability distribution on $\mathcal{L}$ and the final equality follows from the fact that the leaf nodes of $Q$ and $Q'$ are the same.

For the second part of the proposition, let $\mathcal{L}_1$ and $\mathcal{L}_2$ be the leaf nodes of $Q_1$ and $Q_2$, respectively. Also, let $\bar{\lambda}^1$ and $\bar{\lambda}^2$ be the optimal Hider strategies on $Q_1$ and $Q_2$, respectively.  Then
\begin{align*}
D_Q &=  \sum_{v \in \mathcal{L}_1}  \bar{\lambda}(v) ~ d(O,v) +  \sum_{v \in \mathcal{L}_2}  \bar{\lambda}(v) ~ d(O,v) \\
&= \left( \frac{p\mu_1}{p\mu_1 + q\mu_2} \right) \sum_{v \in \mathcal{L}_1}  \bar{\lambda}^1(v) ~ d(O,v) +   \left(\frac{q\mu_2}{p\mu_1 + q\mu_2} \right) \sum_{v \in \mathcal{L}_2}  \bar{\lambda}^2(v) ~ d(O,v) \\ 
&= \frac{p \mu_1 D_{Q_1} + q\mu_2 D_{Q_2}}{p\mu_1+q\mu_2},
\end{align*}
where the penultimate equality follows from the definition of $\bar{\lambda}$ and the final equality follows from the definition of $D_{Q_1}$ and $D_{Q_2}$.
\end{proof}

We can now state and prove our main theorem, which includes an expression for the value of the game. We describe the optimal strategy for the Searcher by giving the favoring bias $\beta$ of searching the favored branch first (without needing to observe the signal) when at a branch node.
\begin{theorem} \label{thm:value}
Let $(Q,O)$ be a rooted tree with length $\mu = \mu(Q)$. 
\begin{enumerate}[(i)]
\item The value $V=V(Q)$ of the  game $G(Q,O,p)$ is
\begin{equation}
V(Q)=2q  \mu + (p-q)D_Q.
\label{V}
\end{equation}
\item The hiding distribution $\bar{\lambda}$ is optimal for the Hider. 
\item When at a branch node with branches $Q_1$ and $Q_2$ of lengths $\mu_1$ and $\mu_2$ it is optimal for the Searcher to search the favored branch first with probability 
\begin{equation}
\beta=\frac{(p-q)(D_{Q_1}-D_{Q_2})}{2(p \mu_1 + q \mu_2)},
\label{beta}
\end{equation}
 where, without loss of generality, $Q_1$ is the favored branch. With the complementary probability $1-\beta$ the Searcher follows the signal, whichever direction it points.
\end{enumerate}
\end{theorem}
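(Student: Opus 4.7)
The plan is to induct on the number of arcs $n$ of $Q$, proving (i)--(iii) simultaneously. The base case $n=1$ is a single arc $OA$ of length $\ell$: here $\mu=D_Q=\ell$ and $V(Q)=(p+q)\ell=\ell$, which is trivially the time for the Searcher to walk to $A$. If $O$ has degree~$1$ with neighbor $A$ and subtree $Q'$ rooted at $A$, both optimal strategies on $Q$ coincide with those on $Q'$ (delayed by $\ell$), so $V(Q)=\ell+V(Q')$, and combining with Proposition~\ref{prop:Delta} gives the claimed formula. The substance is therefore the case in which $O$ is itself a branch node with branches $Q_1,Q_2$ of lengths $\mu_1,\mu_2$ and mean depths $D_1\ge D_2$, so that $Q_1$ is the favored branch.

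For the upper bound, I would show that the Searcher strategy described in (iii) guarantees expected time at most $V(Q)$ against any Hider leaf~$v$. Consider $v\in Q_1$: the Searcher enters $Q_1$ first with probability $\beta+(1-\beta)p$ and enters $Q_2$ first with probability $(1-\beta)q$, the latter case adding a full $2\mu_2$ for a depth-first tour of $Q_2$ before she returns to $O$ and searches $Q_1$. Applying the inductive hypothesis in $Q_1$, her conditional expected time inside $Q_1$ to reach $v$ is at most $V(Q_1)$, so the total expected time is at most $V(Q_1)+2q\mu_2(1-\beta)$. Substituting the formula~(\ref{beta}) for $\beta$ and using Proposition~\ref{prop:Delta} to rewrite $D_1$ in terms of $D_Q$ makes this collapse to $V(Q)$ exactly; the case $v\in Q_2$ is symmetric, and in fact the defining property of $\beta$ is that these two upper bounds coincide.

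For the lower bound, I would show that $\bar{\lambda}$ forces expected search time at least $V(Q)$ against every Searcher strategy. Because signals at distinct branch nodes are conditionally independent and a depth-first tour of a Hider-free branch takes time exactly twice its length, induction in each subtree reduces the problem to checking the four pure root choices: always $Q_1$, always $Q_2$, follow the signal, oppose the signal. Writing $\pi_1=p\mu_1/(p\mu_1+q\mu_2)$ and $\pi_2=q\mu_2/(p\mu_1+q\mu_2)$ for the $\bar{\lambda}$-probabilities of the two branches, each expected time is a short computation; after simplification using $p+q=1$, the actions \emph{always $Q_1$} and \emph{follow signal} both yield exactly $V(Q)$, while \emph{always $Q_2$} and \emph{oppose signal} both exceed $V(Q)$ by $2(p-q)\mu_1\mu_2/(p\mu_1+q\mu_2)$. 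Any mixture therefore delivers at least $V(Q)$, and combined with the upper bound this proves (i) while simultaneously certifying optimality of $\bar{\lambda}$ and of the stated Searcher strategy.

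The main technical obstacle is executing these expected-time computations without sign errors: one must condition correctly on signal correctness, account for the extra $2\mu_i$ spent on depth-first tours of empty branches, and check that the single formula~(\ref{beta}) really equalises the Hider-in-$Q_1$ and Hider-in-$Q_2$ upper bounds. Proposition~\ref{prop:Delta} does much of this work by packaging the identity $D_Q=(p\mu_1 D_1+q\mu_2 D_2)/(p\mu_1+q\mu_2)$. Finally, one should confirm $\beta\in[0,1]$: non-negativity follows from $D_1\ge D_2$ and $p>q$, and the bound $D_i\le\mu_i$ forces $\beta<1/2$.
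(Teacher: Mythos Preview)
Your proposal is correct and rests on the same inductive scaffold as the paper (single-arc base case, degree-one reduction via $V(Q)=\ell+V(Q')$, then the branch-node case handled through Proposition~\ref{prop:Delta}), but you organise the branch-node step differently. The paper treats the root decision as an explicit $2\times4$ matrix game, eliminates the ``opposite'' column by dominance, studies the three remaining columns $T_{[1,1]},T_{[2,2]},T_{[1,2]}$ as linear functions of the Hider's probability $x$ of lying in $Q_1$, invokes Lemma~\ref{lem:value} to pin down their slopes, and locates the max--min at the intersection of $T_{[1,1]}$ and $T_{[1,2]}$, thereby \emph{deriving} $x^*=\bar\lambda(Q_1)$ before recovering $\beta$ by indifference. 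You instead take the stated strategies as given and \emph{verify} that they form a saddle point: the Searcher rule with bias $\beta$ yields expected time exactly $V(Q)$ against every leaf, while $\bar\lambda$ yields at least $V(Q)$ against each of the four pure root actions (with the ``always $Q_1$'' and ``follow'' actions tying at $V(Q)$ and the other two exceeding it by $2(p-q)\mu_1\mu_2/(p\mu_1+q\mu_2)$). Your route is a bit shorter, dispenses with the geometric picture and the appeal to Lemma~\ref{lem:value}, and makes transparent that $\beta$ is precisely the equalising bias; the paper's route is more constructive in that it exhibits how $x^*$ and $\beta$ are discovered rather than checked. Your closing sanity check $\beta<1/2$ (from $D_1\le\mu_1$, $D_2\ge0$ and $p>q$) is also correct.
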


\begin{proof} 
The proof is by induction on the number of arcs of $Q$. If $Q$ has only one arc of length $\mu$, then $D(Q)=\mu$ and the right-hand side of~(\ref{V}) simplifies to $\mu$, which is trivially equal to the value of the game. Each player only has one strategy.

Now suppose the theorem is true for all trees with up to $m$ arcs, and let $Q$ have $m+1$ arcs. First suppose $O$ has degree one, and let $A$ be its neighbor. Let $Q'$ be the subtree rooted at $A$ and denote its length by $\mu'$. Let $\ell$ be the length of the arc $OA$. Then it is easy to see that $V(Q)=\ell+V(Q')$, so by the induction hypothesis,
\[
V(Q)=\ell+ 2q  \mu' + (p-q)D_{Q'} = \ell + 2q(\mu-\ell) +(p-q)D_{Q'}.
\]
By Proposition~\ref{prop:Delta}, we have $D_{Q'}=D_Q-\ell$, so
\[
V(Q) = \ell + 2q(\mu-\ell) + (p-q)(D_Q - \ell) = 2q \mu + (p-q)D_Q.
\]
Thus, part (i) is proven, and parts (ii) and (iii) are trivially true.

Finally, suppose that $O$ is a branch node with branches $Q_1$ and $Q_2$, where $Q_1$ is the favored branch. Let $\mu_1$ and $\mu_2$ be the lengths of $Q_1$ and $Q_2$, respectively.
Since we assume the Searcher chooses a depth-first search, the Hider has two strategy classes. He can play the optimal mixed strategy on the leaves of $Q_1$ or the optimal mixed strategy on the leaves of $Q_2$. 
For the Searcher, she has two possible signals ($1$ or $2$) and two searches (search $Q_{1}$ first or search $Q_{2}$ first). 
This gives her four strategies $\left[ X,Y\right] $ where $X$ is the index of the subtree she searches first with signal $1$ and $Y$ is the index of the subtree she searches first with signal $2$. 
We call the strategy $\left[ 1,2\right]$ ``follow'' (go with the signal) and the strategy $\left[ 2,1\right]$ ``opposite''. 
It is easy to see that the resulting matrix game is given by Table~\ref{tab:matrix}, where $V_1$ is the value of the game played on $Q_1$ and $V_2$ is the value of the game played on $Q_2$.
\begin{table}[h!]
\caption{\label{tab:matrix} Payoff matrix.} 
\center
\begin{tabular}{l|l|l|l|l}
Hider \textbackslash Searcher & $[1,1]$ & $[2,2] $ & $[1,2] =$ ``follow'' & $[2,1]=$ ``opposite''
  \\ \hline
1 & $V_1$ & $2\mu_2+V_1$ & $q\cdot 2\mu_2 +V_1 $ & $p \cdot 2\mu_2 + V_1 $   \\ 
2 & $2\mu_1+V_2$ & $V_2$ & $q \cdot 2\mu_1 +V_2 $ & $p \cdot 2\mu_1 + V_2$  
\end{tabular}
\end{table}

We explain only one entry, that of row 2 and column 3 (``follow''). The Hider is in $Q_{2}$ so with probability $p$ the signal is $2$, in which case the Searcher goes to $Q_{2}$ first and finds the Hider in expected time $V_2$. 
With probability $q$ the Searcher wastes time $2 \mu_1$ searching $Q_{1}$, covering every arc twice, before finding the Hider in
additional expected time $V_2$. 
We now show that ``follow'' dominates ``opposite'' for any $p>1/2$ by the simply taking the last column of the matrix from the third column.
\[
\left( 
\begin{array}{c}
p\cdot 2\mu_2 +V_1   \\ 
p \cdot 2\mu_1 +V_2 
\end{array}
\right) -\left( 
\begin{array}{c}
q \cdot 2\mu_2+ V_1 \\ 
q \cdot 2\mu_1 + V_2
\end{array}
\right)  \\
=\left( 
\begin{array}{c}
2\mu_2(2p-1) \\ 
2\mu_1(2p-1) 
\end{array}
\right).
\]
Since $p>1/2$, it follows that $2p-1>0$. So ``opposite'' is dominated by ``follow''.

If the Hider chooses row 1 (hide optimally in $Q_1)$ with
probability $x$, the payoffs (expected capture times) $T_{\left[ 1,1%
\right] },T_{\left[ 2,2\right] }$ and $T_{\left[ 1,2\right] }$
corresponding to the Searcher's three undominated columns are given as
linear functions of $x$, as 
\begin{align*}
T_{[1,1]}(x)  &=  xV_1 +(1-x)(2\mu_1+V_2) \\
T_{[2,2]}(x)  &= x(2\mu_2+V_1) + (1-x)V_2 \\
T_{[1,2]}(x) &= x (q \cdot2\mu_2 +V_1) +(1-x)(q \cdot 2\mu_1+V_2).
\end{align*}
The payoff functions of the two no-signal strategies $T_{[1,1]}$ and $T_{[2,2]}$ intersect at the value
\[
\bar{x}=\frac{\mu_1}{\mu_1+\mu_2},\text{ with }T_{[1,1]}\left( \bar{x}\right) =T_{[2,2]}\left( 
\bar{x}\right) =\frac{\mu_1 V_1+\mu_2V_2 + 2\mu_1 \mu_2}{\mu_1+\mu_2}.
\]
Note that $T_{[1,1]}$ is decreasing, because its slope is%
\[
V_1-V_2-2\mu_1 \leq \mu_1-V_2-2\mu_1 = -(V_2 +\mu_1) <0,
\]
where the first inequality follows from Lemma~\ref{lem:value}. 

Similarly, $T_{[2,2]}$ is increasing because its slope is%
\[
V_1-V_2+2\mu_2 \ge V_1-\mu_2+2\mu_2 = V_1+\mu_2 >0,
\]
again applying Lemma~\ref{lem:value}. 

Finally, $T_{[1,2]}$ is decreasing, since its slope is
\[
q \cdot 2\mu_2 +V_1 - q \cdot 2\mu_1  -V_2 = (p-q)(D_{Q_1}-D_{Q_2}) > 0 
\]
where the  equality follows from the induction hypothesis and the inequality follows from the fact that $Q_1$ is the favored branch and $p > q$.

At the point $\bar{x},$ the line $T_{\left[ 1,2\right] }\left( x\right) $ lies below the intersection of the lines $T_{[1,1]}(x)$ and $T_{[2,2]}(x)$.
This can be seen by calculating
\[
T_{[1,2] }( \bar{x})  = \frac{\mu_1V_1+\mu_2 V_2+4q\mu_1 \mu_2}{\mu_1+\mu_2}\\
\]
and observing that
\begin{align*}
T_{[2,2]}( \bar{x}) -T_{[ 1,2] }( \bar{x})  &=\frac{\mu_1V_1+\mu_2 V_2+2\mu_1 \mu_2}{\mu_1+\mu_1}-\frac{\mu_1 V_1+\mu_2 V_2+4q \mu_1 \mu_2}{\mu_1+\mu_2}\\
&=\frac{2\mu_1 \mu_2( 1-2q) }{\mu_1+\mu_2}>0,
\end{align*}
because $q<1/2$. The full picture is sketched out in Figure~\ref{fig:graph}.
\begin{figure}[h]
\center
\includegraphics[width=8cm]{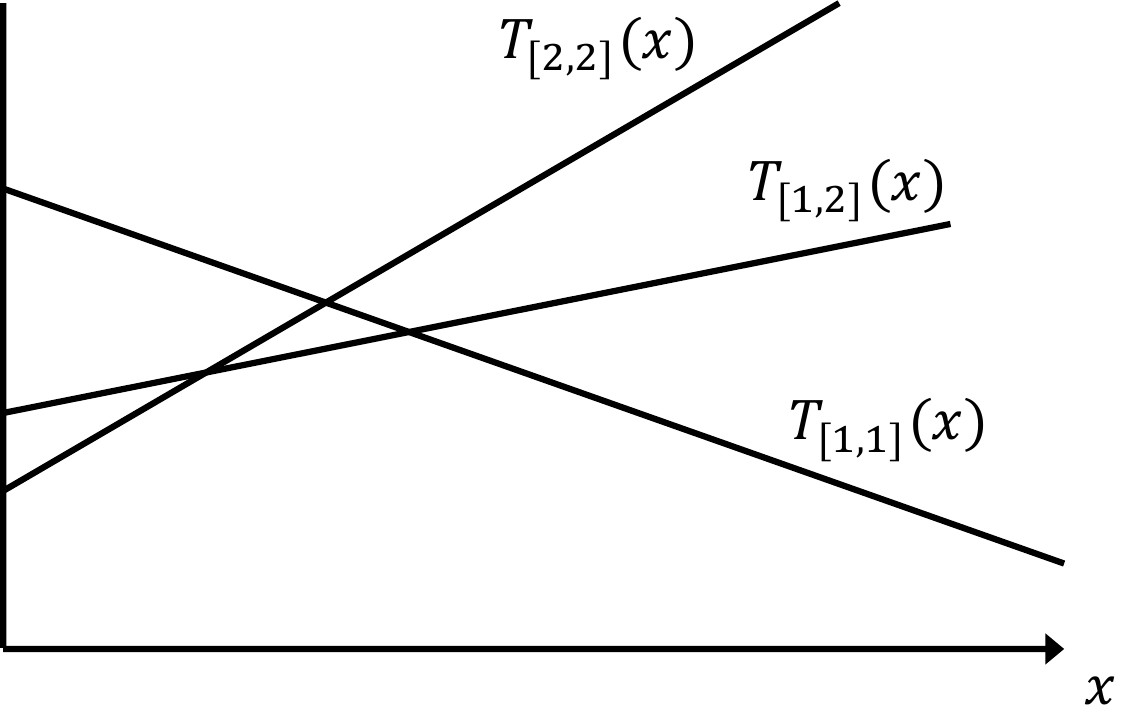}
\caption{A sketch of the functions $T_{[1,1]}(x)$, $T_{[2,2]}(x)$ and $T_{[1,2]}(x)$.}
\label{fig:graph}
\end{figure}

It is clear that the minimum of $T_{[1,1]}(x)$, $T_{[2,2]}(x)$ and $T_{[1,2]}(x)$ has a unique maximum at the intersection of $T_{[1,1]}(x)$ and $T_{[1,2]}(x)$, which occurs at the point
\[
x^* = \frac{p\mu_1}{p\mu_1+q\mu_2} = \bar{\lambda}(Q_1).
\]
This is therefore the optimal choice of $x$ for the Hider, and we have proven part (iii) of the theorem. 

Applying the induction hypothesis, the value of the game is
\begin{align*}
V = T_{[1,1]}(x^*)  &= \frac{p\mu_1V_1}{p\mu_1+q\mu_2} +\frac{q\mu_2(2\mu_1+V_2)}{p\mu_1+q\mu_2} \\
&= \frac{p\mu_1(2q\mu_1 + (p-q)D_{Q_1})}{p\mu_1+q\mu_2} +\frac{q\mu_2(2\mu_1+2q \mu_2 + (p-q)D_{Q_2})}{p\mu_1+q\mu_2} \\
& = 2q \mu + (p-q)D_Q.
\end{align*}
by Proposition~\ref{prop:Delta}. We have proven part (i) of the theorem.

For part (ii), the Searcher's optimal strategy must mix between her two best responses to the Hider strategy calculated above: that is between strategies 1 and [1,2] (``follow''). Using the principle of indifference we calculate the probability $\beta$ of searching the favored branch $Q_1$ first which makes the Hider indifferent between hiding in the two subtrees $Q_1$ and $Q_2$.
\begin{align*}
\beta V_1+(1-\beta) (q\cdot 2 \mu_2 +V_1) & = \beta(2\mu_1+V_2) + (1-\beta)(q \cdot 2\mu_1 +V_2), \text{ so} \\
\beta( 2p\mu_1  + 2q\mu_2 ) &=V_1 - V_2  -2q(\mu_1-\mu_2)\text{ and}\\
\beta &= \frac{ (p-q)(D_{Q_1}-D_{Q_2})}{2(p\mu_1  + q\mu_2) },
\end{align*}
by the induction hypothesis. Thus, we have established (\ref{beta}).
\end{proof}

Note that as the signal becomes more certain, so that $p \rightarrow 1$ and $q \rightarrow 0$, the value (\ref{V}) goes to $D_{Q}$, which goes to the distance of the furthest leaf node from $O$. As previously remarked, the distribution of $\bar{\lambda}$ becomes concentrated on the leaf node at furthest distance from $O$ (assuming the generic case where there is a unique such node). The favored branch will always be the one containing this leaf node, and the signal will always be accurate, so the Searcher will always take this branch.

As the signal becomes more uncertain, so that $p ,q \rightarrow 1/2$, the value (\ref{V}) goes to $\mu$, the hiding distribution $\bar{\lambda}$ goes to the EBD distribution, and the probability $\beta$ goes to 0, so that the Searcher follows the signal (which is determined by the toss of a fair coin) with probability 1.

We also note that if the value of the game is $\mu$, then $\mu=2q\mu+(p-q)D_Q$, so that
\begin{align}
(p-q) \mu = (p-q) D_Q \label{eq:value-mu}
\end{align}
As we remarked earlier, $D_Q \le D_{\max}$ and also $D_{\max} \le \mu$. Since $p > q$, we must have $D_{Q}=D_{\max}=\mu$ for (\ref{eq:value-mu}) to hold, so $Q$ must be a single arc with $O$ at one end.

\begin{corollary} \label{cor:pen-node}
Suppose a penultimate node has branches (leaf arcs) of length $\ell,s$, with $s<\ell$.
Then the favoring bias $\beta$ to the favored long arc is given by 
\begin{align}
\beta &=\frac{(p-q)(\ell-s)}{2(p\ell+qs)},
\label{fpen}
\end{align}
the optimal probability of hiding in the long arc is given by
\begin{align}
x^*&=\frac{p \ell }{p \ell+qs }  \label{xf pen}
\end{align}
and the value of the game by 
\begin{align}
V&= 2q(\ell+s) + \frac{(p-q) (p \ell^2 + qs^2)}{p\ell +qs}. \label{Vpen}
\end{align}
\end{corollary}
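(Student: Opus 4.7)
The plan is to apply Theorem~\ref{thm:value} and Proposition~\ref{prop:Delta} directly to the penultimate node, viewing it as the root $O$ of a rooted tree $Q$ whose only two branches $Q_1, Q_2$ are single leaf arcs of lengths $\ell$ and $s$ respectively.

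First I would record the base-case quantities for each branch. On a single arc of length $a$, the recursive definition of $\bar{\lambda}$ puts all mass on the sole leaf, so the mean depth equals the length: in particular $D_{Q_1} = \mu_1 = \ell$ and $D_{Q_2} = \mu_2 = s$. Since $\ell > s$, the condition $D_{Q_1} > D_{Q_2}$ holds, so the long arc is indeed the favored branch of $Q$, as claimed in the statement of the corollary.

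With these quantities in hand, the remaining work is purely substitutional. Plugging $\mu_1 = D_{Q_1} = \ell$ and $\mu_2 = D_{Q_2} = s$ into formula~(\ref{beta}) immediately yields (\ref{fpen}); the definition~(\ref{eq:lambda-bar}) of $\bar{\lambda}$ restricted to $Q_1$ gives the weight $p\ell/(p\ell+qs)$ on the long arc, which is~(\ref{xf pen}); and Proposition~\ref{prop:Delta} produces
\[
D_Q = \frac{p\ell \cdot \ell + qs\cdot s}{p\ell + qs} = \frac{p\ell^2 + qs^2}{p\ell + qs},
\]
which together with $\mu = \ell + s$ and formula~(\ref{V}) yields~(\ref{Vpen}). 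There is no real obstacle here --- the corollary is a direct specialization of Theorem~\ref{thm:value}, and the only point that warrants explicit mention is that on a single-arc tree the mean depth collapses to the arc length, so the recursive formulas reduce to the stated closed forms.
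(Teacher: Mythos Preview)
Your proposal is correct and follows essentially the same approach as the paper: both arguments specialize Theorem~\ref{thm:value} to the case where each branch is a single arc, using that the mean depth of a single arc equals its length (equivalently, that its game value equals its length) to reduce the recursive formulas to the stated closed forms. Your version is slightly more explicit in invoking Proposition~\ref{prop:Delta} to compute $D_Q$, but the substance is identical.
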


\begin{proof}
For leaf arcs the game values are simply the arc lengths and the favored arc $Q_1$ is the arc of length $\ell$, so we have $V(Q_1)=\ell$ and $V(Q_2)=s$. 
So the formula (\ref{beta}) for the favoring bias $\beta$ becomes (\ref{fpen}). 
Observe that as $p,q \rightarrow 1/2$, we have $\beta \rightarrow 0$ so that the Searcher just follows the equiprobable signal, randomly choosing which branch to search first. 
The optimal probability of hiding in the long (favorite) branch, given by the Weighted Branch Density distribution, becomes (\ref{xf pen}), which in the no-signal $\left(p,q \rightarrow 1/2\right) $ model reduces to $\ell/( \ell+s) $ (hiding in each arc with a probability proportional to its length). 
Finally, the formula (\ref{V}) reduces to (\ref{Vpen}) which simplifies to the total length $\ell+s$ in the no signal case of $p,q \rightarrow 1/2$.
\end{proof}

We end this section by showing that the value of the game is non-increasing in $p$. This is intuitively obvious, since the higher $p$ is, the more reliable the signal is, so one would expect the search time to go down.

\begin{proposition} \label{prop:non-inc}
The value of the game is non-increasing in $p$.
\end{proposition}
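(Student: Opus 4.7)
The plan is to prove monotonicity by a signal-dilution argument rather than by trying to differentiate (\ref{V}) directly. The issue with the direct approach is that $D_Q$ itself depends on $p$---through the weights $\bar{\lambda}$ and through which branch is favored at each branch node---and in fact $D_Q$ typically \emph{increases} in $p$, so the sign of $dV/dp$ is not obvious from the closed-form expression. By contrast, the underlying reason for monotonicity is transparent: a more accurate signal can always be corrupted into a less accurate one, so the Searcher in the higher-$p$ game can never be worse off than in the lower-$p$ game.

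Fix $1/2 < p < p' \le 1$. In the game $G(Q,O,p')$, I would have the Searcher, at each branch node independently, flip the incoming signal with probability $\gamma := (p' - p)/(2p' - 1)$, and keep it otherwise. Since $p' > p > 1/2$ one checks $\gamma \in [0, 1/2)$, so this is a legitimate randomization, and a one-line calculation shows that the diluted signal is correct with probability $(1-\gamma) p' + \gamma (1 - p') = p$. The Searcher then plays her optimal strategy for $G(Q,O,p)$, treating the diluted signals as if they were genuine accuracy-$p$ signals.

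The main step is a coupling/independence claim: the joint distribution of the Hider's location together with the diluted signals in the $p'$-game agrees with the joint distribution of the Hider's location together with the genuine signals in the $p$-game. Because signals at distinct branch nodes are independent conditional on the Hider's location, and the fresh coin flips used for dilution are independent of everything, this follows from the per-node accuracy computation above. Consequently, against any Hider strategy in $G(Q,O,p')$, the expected capture time of this dilute-and-play strategy equals the expected capture time of the optimal $p$-strategy against the same Hider strategy in $G(Q,O,p)$, which is at most $V(p)$. Since dilute-and-play is a valid Searcher strategy in $G(Q,O,p')$, we conclude $V(p') \le V(p)$. The only mildly subtle point is the coupling verification; everything else collapses to the one-line signal-accuracy calculation.
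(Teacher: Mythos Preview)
Your proof is correct and follows essentially the same signal-dilution argument as the paper. The only cosmetic difference is that the paper corrupts the high-accuracy signal by replacing it with a uniformly random signal with some probability, whereas you flip it with probability $\gamma$; both yield a signal of the lower accuracy and lead to the same inequality $V(p') \le V(p)$ for $p' > p$.
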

\begin{proof}
Suppose $1 \ge p > p' > 1/2$, and let $V$ and $V'$ be the respective values of the games $G(Q,O,p)$ and $G(Q,O,p')$.  We will show that we can generate a signal which is correct with probability $p'$ by using a signal that is correct with probability $p$. In this way, we can use the Searcher strategy for $p'$ in the game $G(Q,O,p)$, thereby ensuring that we can find the Hider in expected time $V'$ in this game.
 
When the Searcher is at a branch node in game $G(Q,O,p)$ and receives a signal, we create a new signal which is equal to the received signal with probability $x$ and chosen uniformly at random with probability $1-x$.  Then the probability this signal is correct is $xp+(1-x)/2$.  Choose $x$ so that $xp+(1-x)/2 = p'$, and we obtain a signal that is correct with probability $p'$.  The precise value of $x$ is $(p'-1/2)/(p-1/2)$.

Thus the Searcher can ensure an expected search time of at most $V'$ in $G(Q,O,p)$ so that $V \le V'$.
\end{proof}

Note that if $Q$ is a single arc with the root at one of its ends, the value of the game is $\mu$ for any value of $p$, so we cannot say that the game is strictly decreasing in $p$ in general.

\section{Application of the Recursion Techniques}
\label{sec:example}

We now show how the recursion techniques of the previous section can be used to solve the game on the tree of Example~\ref{ex1} (with $p=2/3$), obtaining the optimal Searcher strategies of Figure~\ref{fig:network-sol}.
The rooted tree of Figure~\ref{fig:network} has two branch nodes, $A$ and $O$. 
The recursion works backwards from penultimate nodes to the root, so we start with the subtree at $A$: the arcs $A L_{1}$ and $AL_{2}$.
Since $A$ is a penultimate node, the two branches have values equal to their
lengths $\ell=3$ and $s=2$. Equation~(\ref{fpen}) of Corollary~\ref{cor:pen-node} says that the long arc of length $\ell=3$ is favored and the favoring bias is
\[
\beta=\frac{(p-q)(\ell-s)}{2(p\ell+qs)} = \frac{ (2/3-1/3)(3-2)}{2(2/3(3)+1/3(2))}=1/16,
\]
as indicated in Figure 2. 
Thus if the Searcher arrives at branch node $A$, she goes to leaf node $L_{2}$ immediately with probability $1/16$ without looking at her signal. 
Otherwise she follows the signal. 

The optimal Hider distribution, $\bar{\lambda}$ for the subtree $Q_1$ with arcs $AL_1$ and $AL_2$ is to choose $L_2$ with probability proportional to $p \cdot 3 = 2$ and $L_1$ with probability proportional to $q \cdot 2 = 2/3$, by Equation~(\ref{eq:lambda-bar}). That is, $L_2$ is chosen with probability $3/4$ and $L_1$ is chosen with probability $1/4$. Hence, 
\[
D_{Q_1}= 3/4 \cdot 3 + 1/4 \cdot 2 = 11/4.
\]
Now consider the original game starting at node $O$.
The left branch $Q_2$ at $O$ has 
\[
D_{Q_2}=1+D_{Q_1}=15/4.
\]
The right branch $Q_3$, which is simply the arc $OR$ has $D_{Q_3}=4 > D_{Q_2}$, so it is the favored branch, as indicated in Figure~\ref{fig:network-sol}.
The favoring bias $\beta(O)$, given by (\ref{beta}), is 
\[
\beta(O)=\frac{(p-q)(D_{Q_3}-D_{Q_2})}{2(p \mu(Q_3)+q \mu(Q_2))} =  \frac{(2/3-1/3)(4-15/4)}{2(2/3 \cdot 4+1/3 \cdot 6)} = 1/112.
\]
It is optimal for the Hider to choose $Q_3$ and $Q_2$ with probability proportional to $p \cdot 4=8/3$ and $q \cdot 6 = 2$, respectively. That is, $Q_3$ is chosen with probability $4/7$ and $Q_2$ is chosen with probability $3/7$. Putting this together with the optimal Hider strategy on $Q_1$, we get the optimal hiding probabilities
\[
\bar{\lambda}(R)=4/7, \quad \bar{\lambda}(L_1)= 3/7 \cdot 1/4 = 3/28, \quad \bar{\lambda}(L_2)=3/7 \cdot 3/4 = 9/28.
\]
Also, we have
\[
D_Q=4/7 \cdot 4 + 3/28 \cdot 3 + 9/28 \cdot 4= 109/28.
\]
Hence, by~(\ref{V}), the value of the game is
\[
V = 2q \mu + (p-q)D_Q = 2/3 \cdot 10 + (2/3-1/3)\cdot109/28 = 223/28 \approx 7.96.
\]

\section{Trees of Constant Depth}

We say that a tree has {\em constant depth} if there is an $r$ such that for all leaf nodes $v$, we have $d(O,v)=r$. If $Q$ has more than one arc, the root $O$ will be the unique center of $Q$ and $r$ will be the radius. For trees of constant depth $r$, the mean depth $D(Q,O,\lambda)$ is obviously equal to~$r$. The next corollary follows immediately from Equation~(\ref{V}).
\begin{corollary}
If $Q$ is a rooted tree with constant depth $r$, the value $V$ of the game $G(Q,O,p)$ is given by
\[
V= 2q \mu + (p-q)r.
\]
\end{corollary}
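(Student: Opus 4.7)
The plan is to observe that the mean depth $D_Q$ collapses to a trivial expression when all leaf nodes are equidistant from the root, and then to substitute directly into the formula for $V$ given in Theorem~\ref{thm:value}(i). Since the remark immediately preceding the corollary already flags that $D_Q = r$ in the constant-depth case, the entire argument is just a one-line verification of this identity followed by a substitution.

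First I would recall the definition $D_Q = \sum_{v \in \mathcal{L}} \bar{\lambda}(v)\, d(O,v)$, where $\bar{\lambda}$ is the optimal Hider distribution (which, by Theorem~\ref{thm:value}(ii), is supported on the leaf nodes). Under the constant-depth hypothesis, $d(O,v) = r$ for every $v \in \mathcal{L}$, so the sum factors as $r \sum_{v \in \mathcal{L}} \bar{\lambda}(v) = r$, since $\bar{\lambda}$ is a probability distribution on $\mathcal{L}$. Note that this computation does not depend on the specific recursive form of $\bar{\lambda}$; it would hold for any probability measure on the leaves.

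Then I would invoke Theorem~\ref{thm:value}(i), which gives $V(Q) = 2q\mu + (p-q) D_Q$, and plug in $D_Q = r$ to obtain $V = 2q\mu + (p-q)r$, as required.

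There is no substantive obstacle: the corollary is essentially a restatement of Theorem~\ref{thm:value}(i) specialized to the constant-depth setting, and the only ingredient beyond the theorem itself is the trivial fact that the expectation of a constant is that constant. If anything, the only thing worth a sentence of commentary is why the identity $D_Q = r$ is independent of the particular choice of $\bar{\lambda}$, which makes the corollary immediate.
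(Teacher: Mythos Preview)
Your proposal is correct and matches the paper's approach exactly: the paper simply notes (in the sentence preceding the corollary) that $D_Q = r$ for constant-depth trees and declares the corollary an immediate consequence of Equation~(\ref{V}). Your write-up is, if anything, slightly more detailed than the paper's, which gives no separate proof at all.
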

So as $p$ varies from the no-signal value $p=1/2$ to the full information value $p=1$, the value of the game varies from $\mu$ to $r$.

To consider the effect of signals, consider a family of {\em perfect} binary trees $B_n$. The tree $B_1$ consists of two arcs meeting at the root $O$. Assuming $B_n$ has been defined, $B_{n+1}$ consists of two  arcs meeting at the root $O$, along with a copy of $B_n$ attached (at its root) to the other end of each of these arcs. The tree $B_3$ is depicted in Figure~\ref{fig:B3}.

\begin{figure}[h]
\center
\includegraphics[width=8cm]{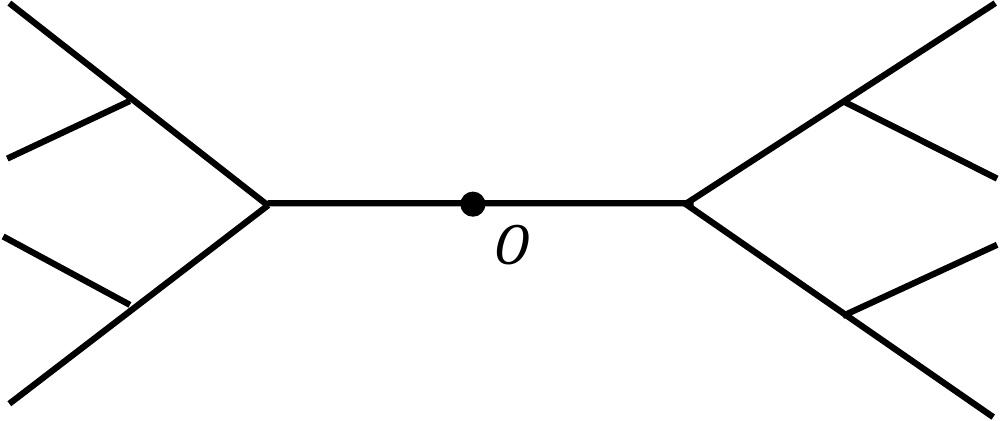}
\caption{The tree $B_3$.}
\label{fig:B3}
\end{figure}

The tree $B_n$ has $2^n$ leaf nodes and $2^{n+1}-2$ arcs. Suppose we take each arc to have length $\ell=1/(2^{n+1}-2)$, so that the total length $\mu$ of $B_n$ is $1$. Without signals, they are equally easy to search, and the value of the game is $\mu=1$. Now fix any $p>1/2$. The tree $B_n$ has constant depth $n\ell=n/(2^{n+1}-2)$, which is decreasing in $n$ and converges to $0$ as $n\rightarrow \infty$. By Theorem~\ref{thm:value}, we have
\begin{align*}
V & = 2q\mu  + (p-q)D \\
&= 2q + (p-q)n\ell \\
&\rightarrow 2q,
\end{align*}
as $n \rightarrow \infty$. So the value is decreasing in $n$ and converges to $2q$. For $p=2/3$, we have $V=2/3+(1/3)n/(2^{n+1}-2)$, converging to $2/3$, as plotted in Figure~\ref{fig:graph} for $n \ge 2$.

\begin{figure}[h]
\center
\includegraphics[width=8cm]{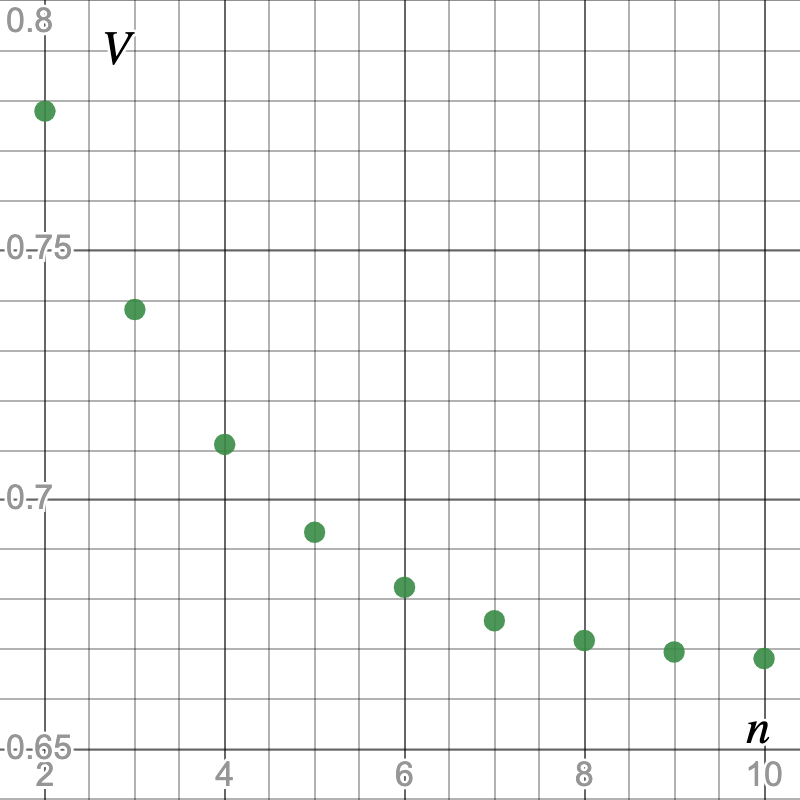}
\caption{Value of the game on $B_n$ for $p = 2/3$}
\label{fig:graph}
\end{figure}

\section{Conclusion} 

This paper has addressed the question of how to optimally search for an adversarially hidden target on a tree network in the presence of unreliable signals. We have found optimal solutions for both the Searcher and the Hider that can be calculated recursively, and a closed form expression for the value of the game. Future work might consider a variation of the game we consider here in which the time to traverse an arc depends on the direction of travel, as in the variable speed networks studied in \cite{AL14}.

\section*{Acknowledgements}
This material is based upon work supported by the Air Force Office of Scientific Research under award number FA9550-23-1-0556. 

The authors would also like to acknowledge the Lorentz Center at Leiden University, since some of the results were obtained at a Workshop on Search Games organized by the Lorentz Center.





\end{document}